\definecolor{darkgreen}{rgb}{0,0.51,0.11}
\newtheorem{theorem}{Theorem}[section]
\theoremstyle{definition}
\newtheorem{definition}[theorem]{Definition}
\newtheorem{example}[theorem]{Example}
\newtheorem{remark}[theorem]{Remark}
\numberwithin{equation}{section}
\newcommand{\z}{\mathcal{Z}}
\newcommand{\p}{\mathcal{P}}
\newcommand{\q}{\mathcal{Q}}
\newcommand{\Z}{\mathbb{Z}}
\newcommand{\s}{\mathcal{S}}
\title{ Extended Cellular Automata}
\author[]{P. Mehdipour}
	\address[Pouya Mehdipour]{%
		Departamento de Matem\'atica, Universidade Federal de Vi\c cosa, Brazil.
	}
	\email{pouya@ufv.br}
\author[]{M. Salarinoghabi}
	\address[Mostafa Salarinoghabi]{%
		Departamento de Matem\'atica, Universidade Federal de Vi\c cosa, Brazil.
	}
	\email{mostafa.salarinoghabi@ufv.br}
\author[]{P. Gibrim}
	\address[Paula Gibrim]{%
		Departamento de Inform\'atica, Universidade Federal de Vi\c cosa, Brazil.
	}
	\email{paula.gibrim@ufv.br}
\subjclass{Primary: 37Bxx, 68Qxx. Secondary: 37B10, 37B15, 68Q80.}
\keywords{ Symbolic dynamics, Cellular Automata, Wolfram Classification, zip shift}
\begin{document}
\maketitle

%%%%%%%%%%%%%%%%%%%%%
\begin{abstract}
In this work, the one-dimensional Cellular Automaton is extended to one that involves two sets of symbols and two global rules. As a main result, the Extended Curtis-Hedlund-Lyndon Theorem is demonstrated. Such constructions can be useful in studying complex systems involving two related phenomena and provide a way to their co-study.
\end{abstract} 
%%%%%%%%%%%%%%%%%%%%%
\section{Introduction}
Many natural phenomena in science are modeled using differential equations. However, some phenomena are too complex to be adequately modeled by equations, and their behavior cannot be fully understood without considering the interaction of their components. Examples of such systems include the human brain, the immune system, disease transmission, communication systems, financial markets, transportation networks, ecosystems, and more. These are known as complex systems. A key characteristic of these systems is that they consist of many independent factors that interact with one another and exhibit self-organization \cite{B},\cite{13},\cite{12}. Investigating and understanding complex systems, as well as simulating the phenomena that emerge from the interaction of their components, is of central importance across many scientific fields.

Cellular automata have proven to be a highly effective approach for addressing many scientific problems, providing an efficient way to model and simulate phenomena that are difficult to capture using traditional mathematical and computational techniques. A cellular automaton is a discrete dynamical system that evolves on a grid of cells, where each cell's state is updated according to a set of local rules. The global evolution of the system is governed by a rule that determines the overall behavior of the grid. From a computational perspective, cellular automata have the same computational power as universal Turing machines, meaning they can simulate any computation. One of their most remarkable features is the ability to model and display complex behavior arising from simple rules, making them well-suited for studying and modeling physical and natural phenomena \cite{10}, \cite{02}, \cite{2}.

Cellular automata were first introduced in the 1950s through the work of John von Neumann and Stanislaw Ulam. By the 1960s, they had become an established concept in the field of symbolic dynamics. In 1969, G. A. Hedlund published a seminal paper, which is considered a cornerstone in the mathematical study of cellular automata. The key result of this work is the Curtis-Hedlund-Lyndon Theorem, which characterizes cellular automata as continuous maps that commute with two-sided shift maps \cite{H}, \cite{2}.

In this paper, we extend the classical cellular automaton framework using symbolic dynamics to introduce a new generation of one-dimensional cellular automata that involve two global rules (see Figure \ref{fig:principal} for an example). As a main result, we present the Extended Curtis-Hedlund-Lyndon Theorem, providing a mathematical characterization of extended cellular automata in terms of their symbolic dynamics.
\begin{figure}[h]
\begin{center}
\includegraphics[width=0.25\textwidth]{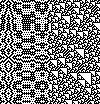}
\caption{An example of a new generation of one-dimensional CA.}
\label{fig:principal}
 \end{center}
\end{figure}
The Zip-CA construction can be useful for studying complex systems that involve two interrelated phenomena. In some future works, we will generalize this extended symbolic dynamics and develop cellular automata that incorporate any finite number of evolution rules \cite{MLS1}, \cite{MLS2}. 
%%%%%%%%%%%%%%%%%%%%%%%%
%%%%%%%%%%%%%%%%%%%%%%%%
\section{Zip-Shift space}%\label{sec:zshift}
%%%%%%%%%%%%%%%%%%%%%%%%
%%%%%%%%%%%%%%%%%%%%%%%%
For natural numbers $k,m$ with $k\leq m$, consider two finite sets of alphabets $\z = \{a_1,\ldots,\,a_k\}$ and $\s= \{0,\ldots,\,m-1\}$.  
Let  $\Sigma_{\s}\subseteq \{(x_i)_{i\in \Z},\, x_i\in \s\}$ be a shift space (for details on shift space see for instance \cite{3}).
For any integer $\ell \geq k$, a finite sequence 
\[\omega_{[k,\ell]}=(x_k\, x_{k+1}\, \ldots\, x_{\ell}),\]
 which occurs in some point $(x_i)_{i\in \Z}\in \Sigma_{\s}$
is called a \textit{word} or \textit{block} of length $\left|\omega_{[k,\ell]}\right|=\ell-k+1$.
Now, for every $n\geq 1$, assume that $B_n^{\s}:=B_n^{\s}(\Sigma_{\s})$ be  
the set of all admissible words of length $n$ and also 
\[B^{\s}:=B^{\s}(\Sigma_{\s})=\bigcup_{n\geq 1}B_n^{\s}.\]
Note that obviously, $B_1^{\s}=\s$.
\begin{definition}%\label{def:zip}
Using notations stated above, let  $\tau: \s\longrightarrow \z$ be a surjective transition map, which is not necessarily invertible. 
Also, let $Y=\{y=(y_i)_{i\in \Z},\, y_i\in \s\}$ be a two-sided shift space with its associated shift map $\sigma$. For any point $y\in Y$, we correspond a 
point $x=(x_i)_{i\in \Z}=(\ldots,x_{-2},x_{-1};\,x_0,\,x_1,\,x_2,\ldots)$ such that
\begin{equation}\label{def:zip}
x_i=\left\{\begin{tabular}{ll}
$y_i\in \s$ & $i\geq 0$\\
$\tau (y_i)\in \z$& $i<0$. 
\end{tabular}\right.
\end{equation}
Then consider the following space which:
\[
\Sigma_{\z,\s}:= \{x=(x_i)_{i\in \Z}\ : x_i \textrm{\ satisfies} \,(\ref{def:zip})\}.
\]
The space $\Sigma_{\z,\s}$ is so called the ``zip-shift space". 

If $Y$ is a full shift space (i.e., contains all possible admissible words on $\s$), then $\Sigma_{\z,\s}$  will be called  \textit{full zip-shift space}.
\end{definition} 
One important property that a new space should have is invariance under a certain kind of maps, which we define as zip-shift maps. The following definition addresses this concept.

\begin{definition}\label{def:zip shift map}
Following notations stated above, we define the zip-shift map 
\begin{eqnarray*}
\sigma_{\tau} : \Sigma_{\z,\s}&\to& \Sigma_{\z,\s} \\
x=(x_i) &\mapsto& \left(\sigma_{\tau}(x)\right),
\end{eqnarray*}
where for every $i\in\Z$,
\[
\left(\sigma_{\tau}(x)\right)_i = \begin{cases}
\tau(x_{0}) &  i= -1\\
x_{i+1} & i \ne -1,
\end{cases}
\]
which is obviously  well-defined.

\end{definition}
It is well-known that the shift map is homeomorphism. In the case of zip-shift map we have the following affirmation.
\begin{theorem}\label{teo:local homo}
The zip shift map, defined in Definition \ref{def:zip shift map}, is a local homeomorphism which is a generalization of a two-sided shift homeomorphism.
\end{theorem}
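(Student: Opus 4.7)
The plan is to equip $\Sigma_{\z,\s}$ with its natural product topology (a basis being the cylinders that specify finitely many coordinates), and to verify in sequence that $\sigma_{\tau}$ is continuous, that every $x$ admits an open neighborhood on which $\sigma_{\tau}$ is injective with continuous inverse, and that the image of such a neighborhood is open. Continuity of $\sigma_{\tau}$ is immediate: since $(\sigma_{\tau}(x))_i$ depends only on the single coordinate $x_{i+1}$ (or on $x_0$ when $i=-1$), two sequences agreeing on a large central window map to two sequences agreeing on a similarly large window, and the finiteness of the alphabets $\s,\z$ with their discrete topology makes the post-composition by $\tau$ trivially continuous.

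For a fixed $x \in \Sigma_{\z,\s}$ I would take the cylinder $U_x = \{z \in \Sigma_{\z,\s} : z_0 = x_0\}$ as the local chart. It is open and contains $x$, and on it the pinning of the $0$-th coordinate removes the only ambiguity in inverting $\sigma_{\tau}$: if $\sigma_{\tau}(z) = \sigma_{\tau}(z') = y$ with $z,z' \in U_x$, then $z_0 = z'_0 = x_0$ and $z_j = y_{j-1} = z'_j$ for every $j \neq 0$, forcing $z = z'$. The candidate inverse $\rho_{x_0}(y) = (\ldots, y_{-3}, y_{-2}; x_0, y_0, y_1, \ldots)$ is continuous for the same coordinate-wise reason, since each of its entries is either the constant $x_0$ or a single coordinate of $y$.

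The crux of the argument is to identify the image $\sigma_{\tau}(U_x)$ with the cylinder $V_x = \{y \in \Sigma_{\z,\s} : y_{-1} = \tau(x_0)\}$. One inclusion is immediate from Definition~\ref{def:zip shift map}; for the reverse, one must check that $\rho_{x_0}(y) \in \Sigma_{\z,\s}$, i.e.\ that splicing $x_0$ into position $0$ of a negative-index lift of $y$ still arises from an element of $Y$ via~(\ref{def:zip}). This is automatic when $Y$ is the full shift; for a general subshift $Y$ it reduces to a block-admissibility check at the boundary between positions $-1$ and $0$, which is the sole place where the shift-space structure of $Y$ intervenes and is the step I expect to be the main obstacle. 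Once this is settled, $\sigma_{\tau}|_{U_x} : U_x \to V_x$ is a continuous bijection with continuous inverse, hence a homeomorphism, proving the local homeomorphism property. To recover the classical two-sided shift, identify $\z$ with $\s$ (the case $k=m$) and take $\tau = \mathrm{id}$: then~(\ref{def:zip}) gives $x = y$, so $\Sigma_{\z,\s} = Y$, and the rule at $i=-1$ collapses to $(\sigma_{\tau}(x))_{-1} = x_0$, making Definition~\ref{def:zip shift map} coincide with the ordinary shift $\sigma$, whose global bijectivity promotes the local homeomorphism to a genuine homeomorphism — the precise sense in which $\sigma_{\tau}$ generalizes the classical shift map.
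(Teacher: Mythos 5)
Your proposal is correct and follows essentially the same route as the paper: work in the product/cylinder topology, observe that coordinates of $\sigma_{\tau}(x)$ depend on single coordinates of $x$ (giving continuity and openness), and restrict to a cylinder pinning the $0$-th coordinate — the only information destroyed by replacing $x_0$ with $\tau(x_0)$ — to get local injectivity. You are in fact slightly more explicit than the paper (which fixes coordinates $-1,0,1$ and asserts openness from images of cylinders being cylinders), since you exhibit the local inverse $\rho_{x_0}$ and identify the image $V_x$ outright; your flagged admissibility concern is moot here because the paper restricts to full zip-shift spaces, exactly as you note.
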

To prove this theorem, we first need to establish a topology on the zip-shift space. For this purpose, we provide the following definitions and statements:

1) When $\z=\s$ and $\tau(x)=id(x) =x$, then the zip-shift map represents the known two-sided shift map.  As cellular automaton is usually defined on full shift spaces, so we assume that the zip-shift spaces are full zip-shift spaces.  The interested reader can find more details on these concepts in \cite{L}. 
\begin{remark}\label{rmk:1}
Note that by Definition \ref{def:zip}, a full zip-shift map with transition function $\tau$ looks like a shift map for all $i\neq-1$ (i.e., $[\sigma_{\tau}(x)]_i=x_{i+1}$) and  $x_{-1}=\tau(x_0)$.
\end{remark}
2) Now, we equip $\Sigma_{\z,\s}$ with the following metric:\\
First, consider the map 
\[D:\Sigma_{\z,\s}\times \Sigma_{\z,\s}\rightarrow\mathbb{N}\cup \{0\}\]
given by, 
\begin{equation*}
D(x,y)=\begin{cases}
\infty & x=y, \\
\min\{\vert i\vert;\,(x_{i})\neq (y_{i}) \} & x\neq y.
\end{cases}
\end{equation*}

For any $x,y\in\Sigma_{\z,\s}$ the positive function
\begin{equation}\label{def:met}
\bar{d}(x,y):=\frac{1}{2^{D(x,y)}},
\end{equation}
defines a metric on $\Sigma_{\z,\s}$. This metric induces a topology on $\Sigma_{\z,\s}$ which is equivalent to the product topology (known also as Cantor topology). From now on, we consider the metric space $(\Sigma_{\z,\s}, \bar{d})$ with the induced metric topology. 
\begin{definition}
    \item (I) Any closed subset, $\Sigma$, of $\Sigma_{\z,\s}$ which is invariant under $\sigma$ (i.e., $\sigma_{\tau}^{\pm}(\Sigma)=\Sigma$) is called a \textit{sub zip-shift space} of $\Sigma_{\z,\s}$. From now on, $\Sigma$ denotes sub zip-shift space.
    \item (II) The set of all words of length $n$ in $\Sigma_{\z,\s}$, denoted by $B_n^{\z,\s}=B^{\z,\s}_n(\Sigma)$ is defined similar to the shift case. 
Indeed, the set $B^{\z,\s}=\bigcup_{n\geq 1}B^{\z,\s}_n(\Sigma)$
represents all \textit{admissible} words (in some texts it called the \textit{language}) of $\Sigma$.
    \item (III) A shift of finite type, or $SFT$, is a shift space whose set of forbidden words is finite.
\end{definition}
\begin{remark}\label{rem:tau}
Once working with admissible words of an $SFT$ space, we may need to extend the transition map $\tau:\s\to \z$ to a new map $\tilde{\tau}: \s\cup \z \to \z$, where 
\[
\tilde{\tau}(x)= \begin{cases}
\tau(x) & x\in \s,\\
x & x\in \z.
\end{cases}
\]
 In this way we define
 \[
 \tilde{\tau}(x_{[i-k,i+k]}):=\tilde{\tau}(x_{i-k})\tilde{\tau}(x_{i-k+1})\dots \tilde{\tau}(x_{i+k}).
\]
%Note that as $\tilde{\tau}(x_{[i-k,i+k]})$ is a word which entries that belong to $\z$, whenever $i>0$, we exchange it to $i<0$. 
\end{remark}
An important concept that we use frequently in study of zip-shift spaces is cylinder set.  The definition of these sets is similar to the shift space. More precisely,
for a zip-shift space $(\Sigma, \overline{d})$ the basic cylinder sets are 
\begin{equation*}%\label{B-C}
C_i^{s_i}=\{x\in\Sigma\,|\,x_{i}=s_{i}, i\in\mathbb{Z}\}.
\end{equation*}
Basically, $C_i^{s_i}$ represents the set of all points in $\Sigma$, that have $s_{i}$ in their $i^{th}$ entry. Obviously,  basic cylinders are open subsets (in fact they are clopen sets) in the product topology of zip-shift spaces.  
Also, \textit{general cylinder sets} are given by 
\[
C_{i_1,\dots,i_k}^{s_{1},\dots,s_{k}}=\{ x=(x_i)\in \Sigma\,|\,x_{i_1}=s_1,\dots,x_{i_k}=s_k; i_1,\dots,i_k\in \mathbb{Z}^d\},
\]
where, for $1\leq j\leq k$, the element $s_j\in \z$ if $i_j<0$, and $s_j\in \s$ when $i_j\geq 0$.

As cylinder sets are defined independently, so we have 
\[
C_{i_1,\dots,i_k}^{s_{1},\dots,s_{k}}=C_{i_1}^{s_{i_1}}\cap C_{i_2}^{s_{i_2}}\cdots \cap C_{i_k}^{s_{i_k}}.
\] 
\begin{remark}
The set of all cylinder sets forms a basis for the topology. Moreover, $\Sigma$ is a compact subset of $\Sigma_{\z,\s}$ (Tychonoff Theorem \cite{M}).
\end{remark}
Now we are ready to prove Theorem \ref{teo:local homo}:
\begin{proof}[\textbf{Proof of Theorem \ref{teo:local homo}}]
Since the image and pre-image of the basic and general cylinder sets are cylinder sets, one concludes that $\sigma_{\tau}$ is a continuous and open map. 

Let $\bar{x}=(\cdots x_{-1};x_0x_ 1\cdots)$ be an arbitrary element of $\Sigma_{\z,\s}$. There exists a cylinder set $C=C_{-1\,0\,1}^{x_{-1}\,x_{0}\,x_1}$ such that $\bar{x}\in C$. We claim that the restriction of $\sigma_{\tau}$ to $C$ is injective.
In fact, if $\sigma_{\tau}(x)=\sigma_{\tau}(y)$ for some $\bar{y}=(\cdots y_{-2}\,x_{-1};x_0\,x_ 1\,y_2\cdots)\in C,$  then, 
\[(\cdots x_{-2}\,x_{-1}\,\tau(x_0); x_1\,x_2\cdots)=(\cdots y_{-2}x_{-1} \tau(x_0); x_1\,y_2 \cdots).\]
Therefore $x_i=y_i$ for all $i\in\mathbb{Z}.$ This means that $\bar{x}=\bar{y}$. Therefore, $\sigma_{\tau}$ is injective on $C$ and hence is a local homeomorphism. 
\end{proof}
\begin{remark}\label{rmk:2}
 Note that:
In general, the zip-shift maps defined on two sets of finite alphabets, represent finite-to-1 local homeomorphisms,  i.e., the pre-images of elements of the zip-shift space over the zip-shift map are finite. In particular, for zip-shift space $\Sigma_{\z,\s}$ with transition map $\tau: \s\to \z$, when the cardinality of the pre-images over $\sigma_{\tau}$ is equal to some fixed $n$ for all $x\in \Sigma_{\z,\s}$ we say that it is an n-to-1 zip shift map. 
%with pre-image degree ($Pdeg(\sigma_{\tau})$) equal $n$ 
%and when exists points $x,y\in \Sigma_{\z,\s}$ with different cardinality of pre-images, we call it a finite-to-1 zip shift map.
%\item 2) If $\s=\z$ and $\tau: \s\to \z$ the identity map, then the associated zip-shift map is a full shift on $\s$.
\end{remark}
 In what follows we give the example of a 2-to-1 full zip shift map.
\begin{example}
Let $\s=\{0,1,2,3\}$, $\z=\{a,b\}$ and assume that the corresponded transition map $\tau:\s\to \z$ is defined as 
\[
\begin{cases}
\tau(0)=\tau(3)=a, & \\
\tau(1)=\tau(2)=b. &
\end{cases}
\]
Then for $\bar{x}=(x_i)_{i\in\mathbb{Z}}=(\cdots a\,b\,b\,b\,a\,\textbf{;}1\,0\,3\,0\,2\,\cdots)\in \Sigma$,  one can verify that 
\[\sigma_{\tau}(\bar{x})=(\cdots a\,b\,b\,b\,a\,b\,\textbf{;}\,0\,3\,0\,2\,\cdots),\] 
and 
\[\sigma_{\tau}^{-1}(\bar{x})=\{(\cdots a\,b\,b\,b\,\textbf{;}0\,1\,0\,3\,0\,2\,\cdots),(\cdots a\,b\,b\,b\,\textbf{;}3\,1\,0\,3\,0\,2\,\cdots)\}.\]
In fact for any $\bar{x}\in \Sigma_{\z,\s}$ the cardinality of $\sigma_{\tau}^{-1}(\bar{x})$, i.e. $\#(\sigma_{\tau}^{-1}(\bar{x}))=2$ 
%(i.e. $Pdeg(\sigma_{\tau})=2$) 
and $\sigma_{\tau}$ is a 2-to-1 map.
\end{example} 
%%%%%%%%%%%%%%%%%%%%%%%%%%%
\section{Extended Curtis-Hedlund-Lyndon Theorem}\label{sec:Curtis theorem}
One of the fundamental results in symbolic dynamics is the Curtis–Hedlund–Lyndon Theorem. This theorem is a mathematical characterization of cellular automata in terms of their symbolic dynamics. 
In what follows first in Theorem \ref{teo:CHL_CA}, we present the classic Curtis–Hedlund–Lyndon theorem (see \cite{H}) for classic Cellular Automata and then in Theorem \ref{teo:extended curtis} we demonstrate an extended version of this theorem for generalized Cellular Automata. It is worth mentioning that,  an even more general definition of extended sliding block codes based on zip-shift spaces (not necessarily full) and a general Extended Curtis–Hedlund–Lyndon are given in \cite{L}.

\begin{definition}[Shift Based-Sliding Block Codes]\label{SBC_classic} 
   For a given pair of natural numbers $n$ and $m$, let $\s$ and $\mathcal{T}$ be two finite alphabet sets (state spaces). Assume that $X\subseteq \s^{\Z}$ is a shift space and $N=m+n+1$ (usually, $m$ is called the memory and  $n$ is called the anticipation of the dynamic). A continuous map $\phi: X\to {\mathcal{T}}^{\mathbb{Z}}$ is called a \textit{Sliding Block Code (simply denoted by} $SBC$) if there exists  a code map (i.e., a local rule) $\Phi:B_N(X)\to {\mathcal{T}}$ such that, for all $i\in\mathbb{Z}$ and every $\bar{x}=(x_i)_{i\in\mathbb{Z}}\in X$, we have:
	\begin{itemize}
		\item[($1$)] $\big[\phi(\bar{x})\big]_i=\Phi\big(x_{[i-m,i+n]}\big)$,
		\item[($2$)] $\phi$ commutes with shift maps. 
		(i.e., $\forall k\in \mathbb{Z}, \sigma^k\circ\phi = \phi\circ \sigma^k$).
	\end{itemize}
	Furthermore, we call $\phi$ a \textit{ Cellular Automaton} (or simply a CA), when $\s=\mathcal{T}$ and $X=\s^{\mathbb{Z}}$.
\end{definition}
\begin{theorem}[Curtis–Hedlund–Lyndon \cite{BS}] \label{teo:CHL_CA}
 Let $\s$ be a finite symbolic set. Then $G: \s^{\mathbb{Z}}\to \s^{\mathbb{Z}}$ is a cellular automaton if, and only if, it is continuous and commutes with the shifts.
\end{theorem}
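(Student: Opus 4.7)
My plan is to split the proof along the biconditional. The forward implication is straightforward from the definitions: if $G$ is a cellular automaton with local rule $\Phi$ on windows of size $N = m+n+1$, then $G$ commutes with $\sigma$ by condition (2) of Definition \ref{SBC_classic}. Continuity with respect to $\bar{d}$ reduces to the observation that two inputs agreeing on a long central block force the outputs of $G$ to agree on a slightly shorter central block, since each output coordinate depends only on the $N$-window of the input centered at the same position. Concretely, if $x$ and $y$ agree on $[-k-m, k+n]$ then $[G(x)]_i = [G(y)]_i$ for all $|i| \leq k$, so $\bar{d}(G(x), G(y))$ becomes arbitrarily small as $\bar{d}(x,y)$ does.

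For the converse, the idea is to manufacture a local rule from the purely topological hypothesis. I would first use compactness of $\s^{\mathbb{Z}}$ (Tychonoff, as invoked in the remark preceding the proof of Theorem \ref{teo:local homo}) to upgrade continuity of $G$ to uniform continuity with respect to $\bar{d}$. Applying the uniform-continuity estimate to the threshold $\varepsilon = 1/2$, one obtains $N \in \mathbb{N}$ such that whenever $x$ and $y$ agree on the coordinates $[-N, N]$, the images $G(x)$ and $G(y)$ agree at least at coordinate $0$.

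This allows me to define a candidate local rule $\Phi: B_{2N+1}(\s^{\mathbb{Z}}) \to \s$ by setting $\Phi(x_{[-N,N]}) := [G(x)]_0$; the uniform continuity step is precisely what makes this assignment well-defined (independent of the extension of the central block to a full sequence). The remaining task is to show that $\Phi$ recovers $G$ at every coordinate, and this is where the commutation hypothesis enters decisively: for any $x \in \s^{\mathbb{Z}}$ and any $i \in \mathbb{Z}$,
\[ [G(x)]_i = [\sigma^i G(x)]_0 = [G(\sigma^i x)]_0 = \Phi\bigl((\sigma^i x)_{[-N,N]}\bigr) = \Phi\bigl(x_{[i-N, i+N]}\bigr), \]
so $G$ is realized as a sliding block code with memory and anticipation both equal to $N$, hence a cellular automaton by Definition \ref{SBC_classic}.

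The only genuinely delicate step is the compactness-to-uniform-continuity passage, since the existence of a single window size $N$ that works simultaneously for every point is exactly what is needed to convert a pointwise continuous map into a code defined on finite blocks. Given that the metric $\bar{d}$ defined in \eqref{def:met} encodes agreement on central blocks in the most direct way possible, I do not anticipate further obstacles beyond this uniformity issue, and the shift-commutation is used only in the one-line computation above to translate the local rule from coordinate $0$ to an arbitrary coordinate $i$.
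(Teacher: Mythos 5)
Your proof is correct and follows exactly the argument the paper itself uses for the extended version (Theorem \ref{teo:extended curtis}): compactness of $\s^{\mathbb{Z}}$ yields uniform continuity, which produces a window size $N$ making $\Phi(x_{[-N,N]}):=[G(x)]_0$ well defined, and shift-commutation transports the rule to every coordinate. The paper only cites \cite{BS} for the classical statement rather than proving it, but your argument is the standard one and matches the paper's strategy for its generalization.
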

In the following, with the aim of extending this theorem, we first present the definition of zip-cellular automata.
\begin{definition}[Zip-shift Based Sliding Block Codes (Z-SBC)]\label{SBC_zip} 
Let $(\z,\s)$  and $(\q,\p)$ be two pairs of finite alphabet sets (state spaces). Assume that $(\Sigma_{\z,\s},\sigma_{\tau})$ and $(\Sigma_{\q,\p},\sigma_{\kappa})$ represent full zip-shift spaces. Set two natural numbers $n$  and $m$. Put $N=m+n+1$. A continuous map $R: \Sigma_{\z,\s}\to \Sigma_{\q,\p}$, is a \textit{zip-shift-sliding block code} (simply denoted by Z-SBC) if there exists a code map (local rule) $\phi_{R}:B_{N}(\Sigma_{\z,\s})\to \q\cup \p$ such that for every $\bar{x}=(x_i)_{i\in\Z}\in \Sigma_{\z,\s}:$
\begin{itemize}

\item[($1$)] For all $i\in\Z$,  $[R(\bar{x})]_i=\phi_{R}(x_{[i-m,i+n]});$
\item[($2$)] For $i\geq 0$, $\tilde{\kappa}(\phi_{R}(x_{[i-m,i+n]}))=\phi_{R}(\tilde{\tau}(x_{[i-m,i+n]}))$.  
\item[($3$)]  $R$ locally ``commutes" with zip-shift maps, 
    i.e., for $i\geq 0$, both \textit{``function composition equality"}:$R\circ\sigma_{\tau}^i=\sigma_{\kappa}^i\circ R$ and \textit{``set theoretic equality"}: $R(\sigma_{\tau}^{-i}(\bar{x}))=\sigma_{\kappa}^{-i}(R(\bar{x}))$ hold. 
\end{itemize} 
Diagram \eqref{R1} demonstrates these mentioned properties.
\begin{equation}\label{R1}
\begin{tikzcd}
[row sep=tiny,column sep=tiny]& & & \\ \Sigma_{\mathcal{Z},\mathcal{S}} \arrow{rr}{\sigma_{\tau}}\arrow{dd}{R} & & \Sigma_{\mathcal{Z},\mathcal{S}}\arrow{dd}{R} \\& \circlearrowright &  & \\ \Sigma_{\q,\p}\arrow{rr}{\sigma_{\kappa}} & & \Sigma_{\q,\p}.\\
\end{tikzcd}
\end{equation}
Moreover, we call $R$ a  \textit{Zip-Cellular Automaton} (simply denoted by Zip-CA) when $\z=\q$, $\s=\p$, $\sigma_{\tau}=\sigma_{\kappa}$ and consequently $(\Sigma_{\z,\s},\sigma_{\tau})=(\Sigma_{\q,\p}, \sigma_{\kappa})$. Note that in this case, the second condition given above, simplifies to $\tau(\phi_{R}(x_{[i-m,i+n]}))=\phi_{R}(\tau(x_{[i-m,i+n]}))$ for $i\geq 0.$
\end{definition}
Just to emphasize, it is worth noting that  $\sigma_{\tau}^{-i}(\bar{x})$ (Resp. $\sigma_{\kappa}^{-i}(R(\bar{x}))$) is a set in $\Sigma_{\z,\s}$ (Resp. $\Sigma_{\q,\p}$) and $R(\sigma_{\tau}^{-i}(\bar{x}))$ is simply the image of this set under $R$.
\begin{remark}
It is noteworthy that if for given integers $m$ and $n$ and element $\bar x=(x_i)_{i\in\mathbb{Z}}\in \Sigma_{\mathcal{Z},\mathcal{S}}$ one defines,
\begin{equation}\label{def:phi}
\phi_{R}(x_{[i-m,i+n]}):=\begin{cases}
x_{i+1}\in \s \quad\quad &  i\geq 0\\
\tau(x_{0})\in \z\quad\quad &\, i=-1,\\ 
x_{i+1}\in \z\quad\quad &  i<-1. 
\end{cases}
\end{equation} 
Then $R(\bar x)=\sigma_{\tau}(\bar x)$ is a Zip-CA.
\end{remark}
%\[[R(\bar x)]_{i}=
%\begin{cases}
%[R_1(\bar{x})]_i \quad\quad \quad& i\geq 0 \\
%\tau(x) \quad\quad \quad& i=-1 \\[R_2(\bar{x})]_i \quad\quad \quad& i<0.
%\end{cases}\]
%Then the zip-CA $R$ is a zip shift map and in particular this shows that zip shift maps are Zip-CA.
In what follows the Extended Curtis–Hedlund–Lyndon theorem is demonstrated.
\begin{theorem}[Extended Curtis–Hedlund–Lyndon]\label{teo:extended curtis}
 Let $\z,\s$ be two  finite alphabet sets (state spaces). Then $R: (\Sigma_{\z,\s},\tau)\to (\Sigma_{\z,\s},\tau)$ is a zip Cellular Automaton if, and only if, it is continuous and locally ``commutes" (in the sense of Definition \ref{SBC_zip})  with zip-shift maps.
\end{theorem}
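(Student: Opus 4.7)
The $(\Rightarrow)$ direction is a tautology: Definition \ref{SBC_zip} already packages continuity together with the function-composition and set-theoretic commutations of a Zip-CA with $\sigma_\tau$. The substantive direction is $(\Leftarrow)$, and the plan is to mimic the classical Curtis--Hedlund--Lyndon proof, adapting the three standard steps to the zip-shift framework.

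The first step is to extract a finite window. Since $\Sigma_{\z,\s}$ is compact (Tychonoff applied to the clopen cylinder basis) and $R$ is continuous, $R$ is uniformly continuous for the metric $\bar d$. Applying uniform continuity with $\epsilon = 1/4$ produces an $N\in\mathbb{N}$ such that $x_{[-N,N]} = y_{[-N,N]}$ forces $[R(x)]_j = [R(y)]_j$ for $j\in\{-1,0,1\}$. This lets me define a local rule $\phi_R : B_{2N+1}(\Sigma_{\z,\s}) \to \z\cup \s$ by fixing an anchor: for each admissible length-$(2N+1)$ block $w$, set $\phi_R(w) := [R(x)]_0$ (respectively $[R(x)]_{-1}$) for any $x$ extending $w$ at positions $[-N,N]$, depending on whether $w$ sits over the non-negative or the strictly-negative region; well-definedness is exactly what the previous sentence guarantees.

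The second step propagates the local rule to every position via the commutation hypotheses. For $i\geq 0$, iterating the function-composition equality $R\circ\sigma_\tau = \sigma_\tau\circ R$ gives
\[
[R(x)]_i \;=\; [R(\sigma_\tau^{\,i}(x))]_0 \;=\; \phi_R\bigl((\sigma_\tau^{\,i}(x))_{[-N,N]}\bigr),
\]
and Remark \ref{rmk:1} identifies the shifted block $(\sigma_\tau^{\,i}(x))_{[-N,N]}$ in terms of $x_{[i-N,i+N]}$ by applying $\tau$ at the coordinates that $\sigma_\tau^{\,i}$ has dragged across the zip point. For $i<0$, writing $i=-k$, the set-theoretic commutation $R(\sigma_\tau^{-k}(x))=\sigma_\tau^{-k}(R(x))$ lets me select any preimage $y$ of $x$ under $\sigma_\tau^{\,k}$; applying the positive-index formula to $y$ computes $[R(y)]_0$, and the zip structure of $R(x)$ then forces $[R(x)]_{-k}=\tau([R(y)]_0)$, which must be independent of the choice of $y$ precisely because the LHS is.

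The main obstacle is the mismatch in the second step between the block $(\sigma_\tau^{\,i}(x))_{[-N,N]}$ naturally produced by the commutation and the block $x_{[i-N,i+N]}$ demanded by condition (1) of Definition \ref{SBC_zip}: these are distinct elements of $B_{2N+1}$ because each iterate of $\sigma_\tau$ applies $\tau$ at position $-1$. Bridging them is precisely the role of condition (2), $\tilde{\tau}(\phi_R(w)) = \phi_R(\tilde{\tau}(w))$, which I plan to derive as a consistency requirement: evaluating $[R(\sigma_\tau(x))]_0$ in two different ways (once as a shifted coordinate of $R(x)$, once as the local rule on the shifted input) forces the $\tau$-intertwining on exactly the blocks that arise in the reconciliation. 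Once conditions (1) and (2) are verified, together with the assumed commutation (3), all defining clauses of a Zip-CA hold and the theorem follows.
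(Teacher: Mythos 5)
Your proposal follows essentially the same route as the paper's proof: compactness of $\Sigma_{\z,\s}$ gives uniform continuity of $R$, which yields a finite window $N$; the local rule is then defined by reading off $[R(\sigma_{\tau}^{i}(\bar x))]_0$ (with a $\tilde{\tau}$-correction on blocks meeting the negative region), and the commutation hypotheses supply well-definedness and conditions (1)--(3) of Definition~\ref{SBC_zip}. If anything, you are more explicit than the paper about the one genuine subtlety --- that $(\sigma_{\tau}^{i}(\bar x))_{[-N,N]}$ differs from $x_{[i-N,i+N]}$ because each application of $\sigma_{\tau}$ inserts $\tau$ at position $-1$, which is exactly what condition (2) and the map $\tilde{\tau}$ are there to reconcile.
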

\begin{proof}
($\Rightarrow$) A Zip-CA by definition is a zip-shift sliding block code (Definition \ref{SBC_classic}) which is continuous and locally ``commutes" with zip-shift maps. 

($\Leftarrow$) Let $R: (\Sigma_{\z,\s},\tau)\to (\Sigma_{\z,\s},\tau)$  be continuous and locally ``commutes" with zip-shift map $\sigma_{\tau}$. We aim to show that there exists some $m,n\in\mathbb{N}$ and a local rule  $\phi_R: B_{N}(\Sigma_{\z,\s})\to \z\cup \s$ associated with $R$, which satisfies the properties given in Definition \ref{SBC_zip} where $N=m+n+1$. Put $\bar{x}=(x_i)\in \Sigma_{\z,\s}$.

As $(\Sigma_{\z,\s},\bar d)$ is compact (see \eqref{def:met} and Remark \ref{rmk:1}), so $R$ is uniformly continuous. Therefore, for all $\epsilon>0$ there exists  $0<\delta\leq \epsilon$ such that for $\bar{y}=(y_i)\in\Sigma_{\z,\s}$ when $\bar{d}(\bar{x},\bar{y})<\delta$, then $\bar{d}(R(\bar{x}),R(\bar{y}))<\epsilon$. Let $k\in\mathbb{N}$ such that $\frac{1}{2^k}<\delta$. In particular for every $-k\leq i\leq k$, we have $[R(\bar{x})]_i=[R(\bar{y})]_i$. Therefore, we can say that $R$ depends only on $x_{-k},\dots,x_{0},\dots,x_{k}$. Let $n=m=k$ so $N=2k+1$. Define the local rule, $\phi_R: B_{N}(\Sigma_{\z,\s})\to \z\cup \s$, as 

\begin{equation}\label{Eq:LR}
\phi_R(x_{[i-k,i+k]}) = \begin{cases}
    [R(\sigma_{\tau}^i(\bar{x}))]_0 & x_{[i-k,i+k]}\in \s^{N}, \\
    %\tilde{\tau}\left([R\left(\sigma^i_{\tau}(\bar{x})\right)]_0\right) & x_{[i-k,i+k]}\in \z^{N},\\
    \tilde{\tau}\left([R\left(\sigma^i_{\tau}(\bar{x})\right)]_0\right) & \text{otherwise}.
\end{cases}
\end{equation}
Note that as by hypothesis, $R(\sigma_{\tau}^i(\bar{x})) = \sigma_{\tau}^i(R(\bar{x}))$ so $\phi_R$ is well-defined map. To be more precise, let $\bar{x}=(x_i)$ and $\bar{y}=(y_i)$ be two elements of $\Sigma_{\z,\s}$ for which $x_{[i-k,i+k]} = y_{[i-k,i+k]}$. %Three possibilities may happen when all the elements of these blocks belong to $\s$ or to $\z$ or both of them. 
Due to the facts that $R$ is a uniformly continuous map and the way that we choose $k$, we know that $[R(\bar{x})]_i = [R(\bar{y})]_i$ for for every $-k\leq i\leq k$ with $k>1$, in particular it is true at the position $0$. 

To complete the proof, we should demonstrate that $\phi_R$ satisfies the conditions of Definition \ref{SBC_zip}. This means that, we should prove that:
\begin{enumerate}
\item For all $i\in\Z$,  $[R(\bar{x})]_i=\phi_{R}(x_{[i-k,i+k]});$
\item For $i\geq 0$, $\tilde{\tau}(\phi_{R}(x_{[i-k,i+k]}))=\phi_{R}(\tilde{\tau}(x_{[i-k,i+k]}))$.  
\item $R$ locally ``commutes" with zip-shift maps. 
\end{enumerate} 
The first and third above conditions happen obviously. For the second condition, note that using \eqref{Eq:LR} and Remark \ref{rem:tau}, when $i\geq 0$, we have
 \[\phi_R(\tilde{\tau}(x_{[i-k,i+k]}))=\tilde{\tau}([R(\sigma_{\tau}^i(\bar x))]_0) = \tilde{\tau}(\phi_{R}(x_{[i-k,i+k]})).\] 
% This implies that $\phi_R(x_{[i-k,i+k]})=\tilde{\tau}([R(\{\sigma_{\tau}^{-i}(\bar{x})\})]_0)$ for $i<0$. 
Therefore $R$ is a Zip-CA.

\end{proof}
%%%%%%%%%%%%
\subsection{Construction of Elementary Zip-Cellular Automaton}
In this section, we provide a simple way to construct the Zip-CA and show that full zip shift spaces are examples of Zip-CA. In order to provide such construction we start with a full shift space.
Let $\s=\z$ be a set of finite alphabets and consider the bilateral shift space $\Sigma_{S}$ with $\tau=id$. One assumes that $\tilde{\tau}:\z\cup \s \to \z$ is the extension of $\tau$ as in Remark \ref{rem:tau}.
%Moreover, consider some $\q,\p$ and $\kappa:\p\to \q$ with some associated full zip shift space $\Sigma_{\q,\p}$. One assumes that $\tilde{\kappa}:\q\cup \p \to \q$ is the extension of $\kappa$ as in Remark \ref{rem:tau}.
%Note that in the case of real-world applications, it defines the configuration space that can be considered as $\sigma^k$ for some $k\in\mathbb{N}$?????.

%\begin{equation}\label{R}
%\begin{tikzcd}
%[row sep=tiny,column sep=tiny]& & & \\ \Sigma_{\mathcal{Z},\mathcal{S}} \arrow{rr}{\sigma_{\tau}}\arrow{dd}{R} & & \Sigma_{\mathcal{Z},\mathcal{S}}\arrow{dd}{R} \\& \circlearrowright &  & \\ \Sigma_{\q,\p}\arrow{rr}{\sigma_{\kappa}} & & \Sigma_{\q,\p}.\\
%\end{tikzcd}
%\end{equation
\[
\begin{tikzcd}
\Sigma_{\s} \arrow [d,"R_1"]  \arrow[bend right]{dd}{\hspace*{-0.7cm}\rotatebox{0}{R}}\arrow[r,"\sigma"] & \Sigma_{\s} \arrow{d}{\hspace*{-0.6cm}{R_1}} \arrow[bend left]{dd}{\hspace*{0.1cm}\rotatebox{0}{R}}\\
\Sigma_{\s}  \arrow [d,"R_2"] \arrow[r,"\sigma"]& \Sigma_{\s}  \arrow{d}{\hspace*{-0.6cm}{R_2}}\\
\Sigma_{\z,\s} \arrow[r,"\sigma_{\tau}"]& \Sigma_{\z,\s}
\end{tikzcd}
\]

Here $R_1$ is a classical CA with a local rule $\phi_{R_1}:B_{N}(\Sigma_{\s})\to \s $, where $N=m+n+1$ for given natural numbers $m$ and $n$. Now let $\s,\z$ be two alphabet sets and $R_2$ a modified global rule with some associated local rule $\phi_{R_2}:B_{N}(R_1(\Sigma_{\s}))\to \z\cup \s$ where 
\begin{equation}\label{LR}
[R_2(\bar{x})]_i:=\begin{cases}
x_{i}\in \s \quad\quad &  i\geq 0\\
\phi_{R_2}(x_{[i-n,i+m]})\in \z\quad\quad &  i<0. 
\end{cases}
\end{equation} 

%As $R_2: R_1(\Sigma_{\s})\to R_2(\Sigma_{\s})$  and $R_1$ is a CA that commutes with shift, the $R_1(\Sigma_{\s})$ is a sub-zip shift space
%and by Definition \eqref{def:zip}, the $R_2(\Sigma_{\s})$ is a zip-shift space that we called it $\Sigma_{\z,\s}$.  One can equip this space with a zip-shift map $\sigma_{\tau}=\sigma$. 
Set $R=R_2\circ R_1$. It is not difficult to verify that it satisfies the conditions of Definition \ref{SBC_zip}, i.e., it is a Zip-CA. In what follows, we have constructed examples of Elementary Zip-CAs with $\z=\s$ but $\Sigma_{\z,\s}\neq\Sigma_{\s}$. 

%Note that, if one considers $R_1(x)=id(x)$ then $R=R_2$ plays the role of a $\tau_{M}$ in Definition \eqref{def:zip} which leads to the fact that 
%Let $\bar x=(x_i)_{i\in\mathbb{Z}}\in \Sigma_{\z,\s}$, $R_1(\bar x)=\bar y$ ($\bar y \in P^{\mathbb{Z}}$ ) and $R_2(\bar y)=\bar z$ ($\bar z\in \Sigma_{Q,P}$). Then  $[\sigma_{\tau}(\bar x)]_i=x_{i+1}$ (Remark \ref{rmk:1}) and 
%\[
%[R(\sigma_{\tau}(\bar x))]_i=
%\begin{cases}
%[R_1(\bar{x})]_{i+1} \quad\quad \quad& i\geq 0 \\
%[R_2(\bar{x})]_{i+1} \quad\quad \quad& i<0.
%\end{cases}
%=([R(\bar{x})]_{i+1})\]

%[R_2\circ R_1((\sigma_{\tau}(\bar x)))]_i=[R_2(\phi_{R_1}(x_{[i+1-m,i+1+n]}))_{i\in\mathbb{Z}}]_i.\] 
%Let $\bar y=(y_{i+1})_{i\in \mathbb{Z}}=(\phi_{R_1}(x_{[i+1-m,i+1+n]})_{i\in\mathbb{Z}}$ then, 
%\[
%[R_2(\bar y)]_i=[\phi_{R_2}(y_{[i+1-m',i+1+n']})]_i=z_{i+1},\] 
%where $R_2(\bar y)=(\phi_{R_2}(y_{[i+1-m',i+1+n']}))_{i\in \Z}=\bar z=(z_{i+1})_{i\in \mathbb{Z}}.$
%From the other side, 
%\[
%[\sigma_{\kappa}(R(\bar x))]_i=
%\begin{cases}
%\kappa_1([R(\bar{x})]_{0}) \quad\quad \quad& \textit{if}\,\,\, i=-1 \\
%[R(\bar{x})]_{i+1} \quad\quad \quad& otherwise.
%\end{cases}
%=([R(\bar{x})]_{i+1}) \]
\subsection{Some examples of Zip-CA}%\label{sec:example}
%An elementary cellular automaton ($ECA$) is a one-dimensional cellular automaton where there are two possible states (labeled 0 and 1) and the rule to determine the state of a cell in the next generation depends only on the current state of the cell and its two immediate neighbors. There are $8 = 2^3$ possible configurations for a cell and its two immediate neighbors. The rule defining the cellular automaton must specify the resulting state for each of these possibilities so there are $256 = 2^{2^3}$ possible elementary cellular automata. 

 In computer science a classic \textit{cellular automaton} usually is defined by using the following terminology and is denoted by a quintuple of the form $C=(\ell, \mathcal{S}, c_0, n, R)$. For more details see for instance \cite{2}.  
 
 %as  such that:
     \begin{enumerate}
     \item \textit{Cell Grid}: is a strip of one-dimensional cells of finite size $\ell$.
     \item \textit{Set of States}: is a finite set of states or alphabets denoted by $\mathcal{S}$.
         \item \textit{Initial Configuration:} is  a specific association of states in a cell grid, usually denoted by $c_0$.
         \item \textit{Neighborhood radius}: is the radius of a symmetric block around a cell, usually denoted by a natural number $n$.
         %In mathematical notation it is equivalent with $n$ in $N=2n+1$.
         %We define a metric $d:\mathcal{L}\times \mathcal{L} \to \mathbb{N}$ where $d(a,b)=|b-a|=n$ and $a$ and $b$ refer the position of these cells. Therefore, the neighborhood would be the ball $B_{n}(a)$, with center $a$ and radius $n$.
         %such that $\forall c \in \mathcal{N}, \forall r \in \mathcal{L}: r+c \in \mathcal{L}$,
         \item \textit{Local rule}: is an application  $R:\mathcal{S}^{2n+1} \to \mathcal{S}$ where $n$ represents the neighborhood radius.
     \end{enumerate}
     
It is worth mentioning that a configuration $c$ of a cellular automaton is an application $c:\mathbb{Z}\to \mathcal{S}$ that specifies the state of each cell in a cell grid. The set of all possible configurations of a grid is represented by $\mathcal{C}$. If $c$ is a constant function, which would lead all cells to the same state, we call it \textit{trivial configuration}. The local rules applied in the neighborhood of a cell is a dynamical system that is called the  \textbf{global transition function} $G:\mathcal{C}\to\mathcal{C}$, where $G(c) =e$ is a new setting in $\mathcal{C}$. The group of CAs in which, $n=1$ and $|\s|=2$ (e.g., $\s = \{0,1\}$)  are called \textit{Elementary Cellular Automata (ECA)}.
In mid 1980, S. Wolfram began working on cellular automata 
and represents the following classification for elementary cellular automata \cite{11} . 

\begin{itemize}
         \item[] \textbf{Class I}: Almost all initial configurations lead to a homogeneous state, where all cells reach the same value 0 \underline{or} 1, that is, they reach a trivial configuration.
         \item[] \textbf{Class II}: Almost all initial configurations lead to a stable and periodic state in time and spatially in-homogeneous, that is, not all cells have the same value. 
         %To obtain a more concrete definition of the periodic class, we will provide a mathematical definition: \\
         %Let $\sim$ represent a relationship between eventually periodic cellular automata. Let $C_1=(\mathcal{L}, \mathcal{0,1}, \mathcal{N}, f_1)$ and $C_2=(\mathcal{L}, \mathcal{0,1}, \mathcal{ N}, f_2)$ represent two ACEs with global transition functions $G_1$ and $G_2$.
        
       %  \begin{definition}\normalfont
        %     An ACE $C_1=(\mathcal{L}, \{0,1\}, \mathcal{N}, f_1)$ with global transition function $G_1$ belongs to class II if and only if, for almost all configuration $c\in \mathcal{C}_1$, there is $n\in\mathbb{Z}$ such that $G_1^{n}(e)=\sigma^{n}(e)$.
        % \end{definition}
        
         \item[] \textbf{Class III}: Almost all initial configurations lead to a disordered state, with no recognizable pattern. Class III elements are known to have chaotic behavior, implying that they will be sensitive to the initial condition. 
         \item[] \textbf{Class IV}: Almost all initial configurations have temporal evolution with complex structures and with unpredictable evolution that can propagate, create and/or annihilate other structures. In particular, class IV elements is known as ECAs with chaotic behavior whose behavior is a mixture of previous classes.
     \end{itemize}
%To be more precise, Wolfram gives the following examples as typical rules of each class:
%\begin{enumerate}
%\item Class I: Cellular automata which rapidly converge to a uniform state.
%\item Class II: Cellular automata which rapidly converge to a repetitive or stable state.
%\item Class III: Cellular automata which appear to remain in a random state.
%\item Class IV: Cellular automata which form areas of repetitive or stable states, but also form structures that interact with each other in complicated ways.
%\end{enumerate}
This classification is represented in Table \ref{tab:classif} \cite{6}.
%%%%%%%%%%%%%%%%%%%%

\begin{table}[h]
	\caption{\label{tab:classif}Table of Wolfram classification given in \cite{6}.}
	\centering
	\begin{tabular}{|ll|}
		\hline
		\multicolumn{2}{|l|}{\hspace{4.5cm}\textbf{Classification of Wolfram}} \\ \hline
		\multicolumn{1}{|l|}{\textbf{Classes}} & \textbf{Rules} \\ \hline
		\multicolumn{1}{|l|}{\textbf{I}} & \begin{tabular}[c]{@{}l@{}}0, 8, 32, 40, 64, 96, 128, 136, 160, 168, 192, 224, 234, 235, 238, 239, \\ 248, 249, 250, 251, 252, 253, 254, 255\end{tabular} \\ \hline
		\multicolumn{1}{|l|}{\textbf{II}} & \begin{tabular}[c]{@{}l@{}}1, 2, 3, 4, 5, 6, 7, 9, 10, 11, 12, 13, 14, 15, 16, 17, 19, 20, 21, 23, 24, \\ 25, 26, 27, 28, 29, 31, 33, 34, 35, 36, 37, 38, 39, 42, 43, 44, 46, 47, \\ 48, 49, 50, 51, 52, 53, 55, 56, 57, 58, 59, 61, 62, 63, 65, 66, 67, 68, \\ 69, 70, 71, 72, 73, 74, 76, 77, 78, 79, 80, 81, 82, 83, 84, 85, 87, 88, \\ 91, 92, 93, 94, 95, 98, 99, 100, 103, 104, 108, 109, 111, 112, 113, 114, \\ 115, 116, 117, 118, 119, 123, 125, 127, 130, 131, 132, 133, 134, 138, \\ 139, 140, 141, 142, 143, 144, 145, 148, 152, 154, 155, 156, 157, 158, \\ 159, 162, 163, 164, 166, 167, 170, 171, 172, 173, 174, 175, 176, 177, \\ 178, 179, 180, 181, 184, 185, 186, 187, 188, 189, 190, 191, 194, 196, \\ 197, 198, 199, 200, 201, 202, 203, 204, 205, 206, 207, 208, 209, 210, \\ 211, 212, 213, 214, 215, 216, 217, 218, 219, 220, 221, 222, 223, 226, \\ 227, 228, 229, 230, 231, 232, 233, 236, 237, 240, 241, 242, 243, 244, \\ 245, 246, 247\end{tabular} \\ \hline
		\multicolumn{1}{|l|}{\textbf{III}} & \begin{tabular}[c]{@{}l@{}}18, 22, 30, 45, 60, 75, 86, 89, 90, 101, 102, 105, 122, 126, 129, 135, \\ 146, 149, 150, 151, 153, 161, 165, 182, 183, 195\end{tabular} \\ \hline
		\multicolumn{1}{|l|}{\textbf{IV}} & 41, 54, 97, 106, 107, 110, 120, 121, 124, 137, 147, 169, 193, 225 \\ \hline
	\end{tabular}
\end{table}
%%%%%%%%%%%%%%%%%%%

%Following what stated above, in this section we introduce some examples of Zip-CA made by Elementary Cellular Automata of different classes in Wolfram classification \cite{11}. 

\begin{example}
Figure \ref{fig:principal} shows a Zip-CA example which is made from ECA rules R105 of class III and R110 of class IV. More figures of Zip-CA made from different Wolfram classes are given in Figure 3. 
\begin{figure}[htp!]
\begin{center}
\includegraphics[width=0.25\textwidth]{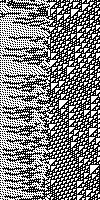}
\caption{Zip-CA: made from rules of classes III and IV.}
\label{fig:principal}
 \end{center}
\end{figure}
%\end{figure}
%\end{example}
%%%%%%%%%%%%%%%%%%%
%\newpage
%%%%%%%%%%%%%%%%%%%
%{\color{blue} Figure (a) is made from R? and R?. Figure (b): ...... Explain more about each figure}
\begin{figure}[hbt!]
        \captionsetup[subfigure]{labelformat=empty}
	\centering
        \subfloat[\centering R248+R235]{{\includegraphics[width=2.375cm]{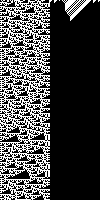}}}%
	\qquad
	\subfloat[\centering R241+R239]{{\includegraphics[width=2.375cm]{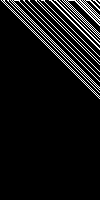} }}%
	\qquad
	\subfloat[\centering R136+R151]{{\includegraphics[width=2.375cm]{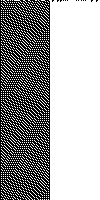} }}
	\qquad
	\subfloat[\centering R40+R147]{{\includegraphics[width=2.375cm]{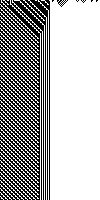}}}%
	\qquad
	%\subfloat[\centering IV+I]{{\includegraphics[width=2cm]{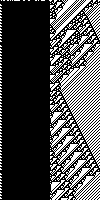} }}%
	%\qquad
	\subfloat[\centering R241+R25]{{\includegraphics[width=2.375cm]{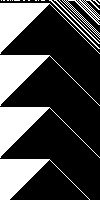} }}
	\qquad
	\subfloat[\centering  R10+R25]{{\includegraphics[width=2.375cm]{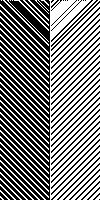} }}%
	\qquad
	\subfloat[\centering  R241+R75]{{\includegraphics[width=2.375cm]{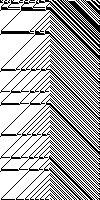} }}
	\qquad
	\subfloat[\centering  R20+R97]{{\includegraphics[width=2.375cm]{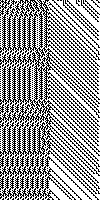} }}%
	\qquad
	%\subfloat[\centering I+I]{{\includegraphics[width=3cm]{1} }}
	%\qquad
	\subfloat[\centering R182+R235]{{\includegraphics[width=2.375cm]{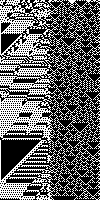}}}%
	\qquad
	\subfloat[\centering  R18+R27]{{\includegraphics[width=2.375cm]{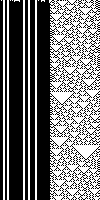} }}%
	\qquad
	\subfloat[\centering R165+R101]{{\includegraphics[width=2.375cm]{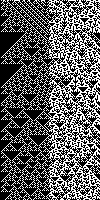} }}
	\qquad
	\subfloat[\centering  R30+R147]{{\includegraphics[width=2.375cm]{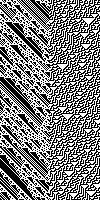}}}%
	\qquad
	%\subfloat[\centering ]{{\includegraphics[width=2cm]{11.png} }}%
	%\qquad
	\subfloat[\centering R106+R249]{{\includegraphics[width=2.375cm]{5.png} }}
	\qquad
	\subfloat[\centering  R41+R49]{{\includegraphics[width=2.375cm]{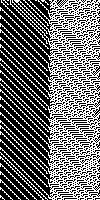} }}%
	\qquad
	\subfloat[\centering R120+R101]{{\includegraphics[width=2.375cm]{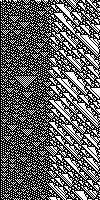} }}
	\qquad
	\subfloat[\centering R110+R225]{{\includegraphics[width=2.375cm]{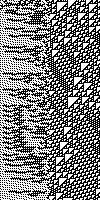} }}%
	%\qquad
	%\subfloat[\centering I+II ]{{\includegraphics[width=2cm]{12.png} }}
	\caption{Some Zip-CAs made from different Wolfram classes}%
\end{figure}
\end{example}

\subsection*{Data availability}
Data available on request from the authors
$\&$ 
The data that support the findings of this study are available from the corresponding author upon reasonable request.
\subsection*{Conflicts of interest}
The authors declare no conflicts of interest.
\subsection*{Acknowledgments} The authors would like to thanks Fundação de Amparo à Pesquisa do Estado de Minas Gerais (FAPEMIG) for financial supports. 
\newpage
%%%%%%%%%%%%%%%%%%%%%%%
\bibliographystyle{alpha}

\end{document}